\newtheorem{thm}{Theorem}
\theoremstyle{definition}
\newtheorem{mydef}{Definition}
\begin{document}
%--------------------------------------------------------------------------------------------------
\title{Purely competitive evolutionary dynamics for games}
\author{Carl~Veller}
\email{carl.veller@gmail.com}
\affiliation{Department of Mathematics and Applied Mathematics, University of Cape Town, Private Bag X3, Rondebosch 7701, South Africa}
\author{Vinesh~Rajpaul}
\email{vinesh.rajpaul@uct.ac.za}
\affiliation{Astrophysics, Cosmology and Gravity Centre (ACGC), University of Cape Town, Private Bag X3, Rondebosch 7701, South Africa}
\date{\today}
%--------------------------------------------------------------------------------------------------
\begin{abstract}
We introduce and analyze a purely competitive dynamics for the evolution of an infinite population subject to a 3-strategy game. We argue that this dynamics represents a characterization of how certain systems, both natural and artificial, are governed. In each period, the population is randomly sorted into pairs, which engage in a once-off play of the game; the probability that a member propagates its type to its offspring is proportional only to its payoff within the pair. We show that if a type is dominant (obtains higher payoffs in games with both other types), its `pure' population state, comprising only members of that type, is globally attracting. If there is no dominant type, there is an unstable `mixed' fixed point; the population state eventually oscillates between the three near-pure states. We then allow for mutations, where offspring have a non-zero probability of randomly changing their type. In this case, the existence of a dominant type renders a point near its pure state globally attracting. If no dominant type exists, a supercritical Hopf bifurcation occurs at the unique mixed fixed point, and above a critical (typically low) mutation rate, this fixed point becomes globally attracting: the implication is that even very low mutation rates can stabilize a system that would, in the absence of mutations, be unstable.
\end{abstract}
%--------------------------------------------------------------------------------------------------
\pacs{02.50.Le, 05.45.-a, 89.65.-s, 87.23.Ge, 87.23.Kg}

\maketitle
%--------------------------------------------------------------------------------------------------
\section{Introduction} \label{sec:intro}
%--------------------------------------------------------------------------------------------------
A relatively recent development in the game theory literature, that of the dynamic treatment of games, has proven successful in providing answers both to the question of which equilibria will arise in a given game, and how they might be reached; both of these questions have proven problematic in the static, introspective approach to game theory \cite{fudenberg1998}. The fundamental idea behind the dynamic treatment is to consider repetitions of a game, rather than a once-off play, thus allowing players' strategies to \emph{evolve} through time. Rules are posited to govern the evolution of the players' strategies, with such rules being either mindless (occurring exogenous to any decision making of the players -- we call such a rule a `dynamics') or minded (where the rule depends explicitly on players' decisions -- we call such a rule a `learning rule'). Different interpretations of the same game might render the same rule either mindless or minded, so that the distinction is not clearly set.

Our focus in this paper will be on a particular rule, which, as we shall discuss, can be interpreted either as a dynamics or a learning rule. Such rules are often most comfortably cast in the language of (biological) evolution, and it is such a description, for the most part, that we shall employ. Typically, a population of agents is considered, with each agent being one of a number of types (the types correspond to strategies). In each period, a subset of the population is paired up, with each pair engaging in a once-off play of the game (which might be termed a fight, a mating, etc., depending on the general interpretation). The result of the pairings and engagements determines the make-up of the population in the next period, when the process is repeated.

The dynamics thus provides a vehicle by which an outcome of the game (the long term make-up of the population, for example) may be reached. For many sensibly founded dynamics (there are conditions for what constitutes a `sensible' dynamics, typically based on the general principle that relative success of a certain type/strategy should lead to its increase as a proportion of the population -- for more technical definitions, a good reference is \citet{samuelson1998}) this outcome tends to correspond to some of the equilibrium concepts defined for once-off games, although it is not unusual for this not to be the case. That is, a sensible dynamics may result in an equilibrium that is not even Nash (see later in the present paper, as well as \citet{fudenberg1998}). 

It is instructive, in terms of understanding the nature of a typical dynamics and also to provide a foundation and a foil for the model that we develop in the main part of the present paper, briefly to describe what is probably the most well-known and widely-applied dynamics, the `replicator dynamics', first introduced by \citet{taylor1978}. An infinite population comprising $n$ types of agent is assumed; the population state is defined as the proportions of each type in the population: 
\[\mathbf{p} = (p_1,p_2,\ldots,p_n)\in \Delta^n,\]
where $\Delta^n$ is the unit simplex in $\mathbb{R}^n$:
\[ \Delta^n:= \left\{\mathbf{p}\in \mathbb{R}^n: \mathbf{p}\geq \mathbf{0} \; \wedge \; \sum_{i=1}^n p_i = 1\right\}. \]

A type's `fitness' at a given time is defined as the expected payoff to one of its agents in a once-off game with a randomly selected opponent from the population; the proportional growth rate of a given type's proportion at a given time is equal to the difference between that type's fitness and the average fitness throughout the entire population. 

The key concept behind the replicator dynamics is of the fitness of types within the population. This is determined not only by how a certain type fares against other types, but also by how well agents of that type do in engagements with agents of the \emph{same} type. Indeed, this is a vital consideration in the concept of evolutionary stability under the replicator dynamics -- a population state comprising only one type of agent is resistant to invasions by other types if, and only if, the predominant type fares better against itself than the potential invader does against the predominant type \cite{hofbauer1998}. This is clearly a sensible consideration in modeling certain systems; it has been argued that the replicator dynamics provides a model for some aspects of evolution in the natural world \cite{hofbauer2003}, where a type whose agents do very poorly amongst themselves would not be expected to succeed in increasing as a proportion of the population, while a type whose agents do very well amongst themselves would be expected at least to be able to maintain a high proportional representation in the population.

However, in certain circumstances, this `cooperation' consideration is not a natural one; there exist instances where the success of a type does not depend at all on how well agents of that type do amongst each other, but only on how well agents of that type do against agents of \emph{other} types. It is dynamics of this type that we consider in the present paper. 

`Purely competitive dynamics', as we shall term them, arise most directly in a knock-out tournament where a large population comprising a certain number of types is paired up in each round, with each pair engaging in a once-off game from which only one of each pair advances, with a contestant's probability of advancing depending on his relative payoff in his engagement. Here, the success of a given type (as measured by the proportion of remaining contestants that are of that type) is invariant to how agents of that type fare against each other. Purely competitive dynamics could also model a learning rule where, in a population of distinct behavioral types, offspring emulate the behavior of their more successful parent. If an offspring's parents are of the same type, it is clear that the offspring will be of that type; there is propagative competition only between parents of differing types. Later in the paper, we shall discuss these examples in the specific context of our model.
  
A relatively new consideration in the game theory literature is that of the effect of mutations (generally speaking, the random changing of types within the population, though mutations have been variously defined) on the dynamics of evolutionary models. The interest in the effect of mutations lies not just in their natural applications -- mutation at a genetic level is one of the key features driving biological evolution \cite{Hartl:1997} -- but also in their interpretation in terms of imperfect memory, experimentation, and bounded rationality \cite{kandori1993}. Mutations have been shown to have qualitatively significant effects on the long-run dynamics of a number of models \cite[see e.g. Refs.][]{rowe1985,bomze1985,fudenberg2006}; in this paper, we shall give some attention to the effect of mutations on our competitive dynamics.

The balance of the paper is structured as follows. Sec.\ \ref{sec:model} sets out the general model -- a sensible characterization of purely competitive dynamics as described above -- while the sections thereafter investigate the behavior of the population state under the dynamics in the context of $3$-type games (governed by $3\times3$ matrices), which have the simultaneous benefits of offering a rich set of possible dynamics and of being easily characterized and illustrated. Sec.\ \ref{sec:zero_mu} analyzes the dynamics without mutations; we show that, under some simplifying assumptions, the behavior of the population state can, qualitatively, be fully characterized.  In Sec.\ \ref{sec:mu}, we investigate the effect of mutations on our dynamics. We show that even low mutation rates can have a nontrivial effect on the behavior of the system; the nature of this effect is typically to stabilize the system in a way that, again, is qualitatively understood. Finally, Sec.\ \ref{sec:conc} concludes, and offers some possibilities for future study.

%--------------------------------------------------------------------------------------------------
\section{The model} \label{sec:model}
%--------------------------------------------------------------------------------------------------
%---------------------------------------------------------------------------
\subsection{Description of model}\label{sec:modDesc}
%---------------------------------------------------------------------------
We assume a population of $N$ agents, each of which is assigned a pure strategy from the strategy set \hbox{$S = \{s_1, s_2, s_3\}$}; an agent will be said to be `of type $i$' if it is assigned $s_i$. The state of the population in period $t$ is defined by the proportions of the respective types in the population; we shall denote the population state at time $t$ by \hbox{$\mathbf{p}_t := (p_{1t},p_{2t},p_{3t}) \in \Delta^3$}. The $3\times3$ payoff matrix for the game to which the strategies are relevant is given by $\mathbf{U}$, so that the payoff to a type $i$ agent in a once-off play with a type $j$ agent is $u_{ij}$. We require that the elements of $\mathbf{U}$ be strictly positive, and that no two opposite elements be equal (the first of these assumptions is fundamental, as will shortly be seen; the second is not as important, but is rather for convenience).

In each period, a proportion of the population is selected and randomly arranged into pairs. The proportion of the population that is selected in a given period will depend on the individual probabilities of selection of the three types, which in turn depend on the population state in the period. Defining $\sigma_i:\Delta^3 \rightarrow [0,1]$, we denote by $\phi_i(\mathbf{p}_t)$ the probability that a type $i$ agent is selected for pairing in period $t$. Two possibilities immediately present themselves for consideration. The first is $\phi_i \equiv 1 \; \forall\;i$, i.e.\ that in each period, regardless of the population state, each agent is selected for pairing. In a biological context, this would be referred to as a population in which there is zero `selection pressure' \cite{Wright:1937}. The second is that an agent's probability of selection is proportional to its expected payoff -- the strength or fitness of its type in the given population state -- so that $\phi_i(\mathbf{p}) \propto \mathbf{e}_i\mathbf{U}\mathbf{p}$ ($\mathbf{e}_i \in \mathbb{R}^3$ denotes the $i^\textrm{th}$ canonical basis element of $\mathbb{R}^3$).

Once the pairs are selected, each pair engages in a once-off play of the game, the outcome of which is the production of two offspring, who replace their parents in the population and enter into the next period (so that the population size remains unchanged). The two offspring of a given pairing are of the same type, with that type determined as follows:
%-------------------------------------------
\begin{equation*}
\text{prob}\,[\text{offsp. type } k \; | \; \text{parents type } i, j] = \frac{\delta_{ik}u_{ij}^\theta + \delta_{jk}u_{ji}^\theta}{u_{ij}^\theta + u_{ji}^\theta},
\end{equation*}
%-------------------------------------------
where $\theta>0$ is a constant, and $\delta_{ab}$ is the Kronecker delta ($\delta_{ab}=1$ if $a=b$, else $\delta_{ab}=0$). It should be noted here that this expression is not arbitrary; it is motivated primarily by the intuitive appeal of two cases, $\theta=1$ and \hbox{$\theta=\infty$}. Taking $\theta=1$, as we shall throughout this paper, the probability that the offspring of a pairing of a type $1$ agent and a type $2$ agent are of type 1, for example, is proportional to the payoff the type 1 parent receives in the pairing. On the other hand, in the limiting case $\theta = \infty$, the offspring assume with certainty the type of the parent with the higher payoff. The parameter $\theta$ therefore governs the transmission from relative strength in a mating pair to probability of propagating one's type in mating.

Often, studies of evolutionary dynamics allow for mutation. We generalize our model by assuming that each offspring, after it inherits a type from its parents but before it enters the population for the next period, changes its type with (fixed) probability $\mu\geq0$, with each of the other two types equally likely. 

In the analysis that follows, for convenience, we shall assume an infinite population, i.e.\ $N=\infty$; the finite-population case is less tractable, and we relegate its consideration to a brief discussion in the concluding section. We shall also focus exclusively on the case $\phi_i \equiv 1$; the case of $\phi_i(\mathbf{p}) \propto \mathbf{e}_i\mathbf{U}\mathbf{p}$ will be investigated in a later study by the authors. Finally, as already mentioned, we assume $\theta=1$ throughout.
%---------------------------------------------------------------------------
\subsection{Possible interpretations} \label{sec:interp}
%---------------------------------------------------------------------------
Though not specifically constituted to emulate a biological process, the model developed above does lend itself to a number of biological interpretations. 

Assume, for example, that each agent represents an animal that is genetically programmed to play some pure strategy, i.e.\ to exhibit a specific behavior or trait. The model could then describe the dynamics of intrasexual selection within a population \cite[see e.g. Refs.][]{Darwin1871,thornhill1983,andersson1994}: a single play of the game pits two animals -- and indirectly, the genes that code for their respective behaviors -- against each other; the animals whose strategies receive higher payoffs within their pairings typically go on to mate and to propagate their genetic material to the next generation, while the weaker animals typically do not. For consistency with the assumption of a fixed population size, we could assume full generational replacement and that victorious animals each produce two offspring, as suggested in Sec.\ \ref{sec:modDesc}; alternatively, we could assume that competition eliminates the losing animals from the population, in which case each victor need only produce a single offspring. (To obviate such contrivances, we might simply generalize the model to allow for a variable population size.) The parameter $\theta$ governs the probability that a weaker competitor beats a stronger competitor: for \hbox{$\theta=\infty$}, stronger competitors will always emerge victorious, whereas for $0<\theta<\infty$, there is a nonzero probability of weaker competitors beating stronger ones (on average, though, stronger competitors will still win more often). A number of interpretations for $\mu>0$ are possible, perhaps the simplest being that a nonzero mutation rate corresponds directly to random mutation at a genomic level.

In any event, the competitive nature of the dynamics is clear: the success of a given strategy depends only on how well it fairs against \emph{other} types. This is a sensible characterization of intrasexual selection: if two animals playing the same strategy are competing for a mate, the payoff within the interaction is irrelevant in the sense that, one way or another, the set of genes that codes for their common strategy (behavior, trait, etc.) will be propagated to the next generation. Of course, if the animals survive for more than one generation, and generations are sufficiently close together, a low common payoff in encounters with their same type might render them weaker in subsequent generations; if this is the case, the replicator dynamics (for example) might be more appropriate.

Though the natural description of our model makes use of biological terminology (`offspring', `parents', `mating', etc.), there are many more general competitive interactions which could conceivably be described by the same model. Consider a boxing tournament, for example. Agents playing the three different strategies could correspond, conveniently, to the three commonly defined styles of boxer:  the `swarmer', the `slugger', and the `out-fighter' \cite{donelson2004}. Boxers are paired up at each stage, with one of the boxers in each pairing advancing to the next round, and the other being eliminated; our dynamics features the sensible condition that a boxer's probability of advancing to the next round is dependent purely on that boxer's strength \emph{relative to} his opponent. (To keep the number of boxers constant at $N$, we could specify that, in advancing to the next round, a boxer earns two fights for that round.)

Furthermore, as already noted, the distinction between mindless and minded rules is not clearly set, and our model can be thought of as describing a process of social or observational learning \cite{miller1941,Schlag1994,Laland:2004}, rather than a mindless competitive interaction (somewhat surprisingly, the modeling of such social behavior is of increasing concern in the context of statistical physics \cite{Castellano2009}).

In the social learning paradigm, players can learn only from existing players in the population, i.e.\ by observing the interaction between other players, or by `asking around'. For example, we might suppose that each generation, new agents (perhaps, but not necessarily, offspring of current agents) enter the population, with each new agent making a once-and-for-all choice of strategy based on an observation of a play of the game between an exiting agent and one other agent drawn randomly from the population. Sensibly, the new agent will be more likely to emulate the strategy which received the higher payoff in the observed interaction: for $\theta=\infty$, the new agent will emulate with certainty the strategy observed to be more successful; $\theta<\infty$ could be explained in terms of imperfect observation, while $\mu>0$ could be explained in terms of experimentation. 

Alternatively, instead of assuming that agents are periodically replaced, we could assume that agents are memoryless or that they consider past experience irrelevant to their current situation: for example, an agent might stick to its strategy until it is paired up with an opponent who receives a higher payoff than it does, in which case it will adopt the strategy of its opponent \cite{fudenberg1998}. 

Given these additional interpretations in an observational learning framework, the dynamics might well be referred to as an `emulation dynamics'; nevertheless, the competitive nature of the dynamics remains clear.
%--------------------------------------------------------------------------------------------------
\section{The case of no mutations} \label{sec:zero_mu}
%--------------------------------------------------------------------------------------------------
As noted, we make the following assumptions throughout the paper: $\phi_i(\mathbf{p}) = 1 \; \forall \; \mathbf{p}  \in \Delta^3, i = 1, 2, 3$; $\theta = 1$; and $N=\infty$. In this section we shall also assume $\mu = 0$ (no mutations). Some noteworthy features of the game under these assumptions are listed below.
%-------------------------------------------
\begin{enumerate}
\item Since $\phi_i(\mathbf{p}) \equiv 1$ for $i = 1,2,3$, the diagonal elements of the payoff matrix $\mathbf{U}$ are irrelevant: when a type $i$ parent meets another type $i$ parent, their offspring are guaranteed to be of type $i$. Thus, we are free to normalize the diagonal elements of $\mathbf{U}$ as we wish -- the dynamics will be unaffected.
\item Multiplying opposite elements of $\mathbf{U}$ by the same factor leaves the effects of the dynamics unchanged; the probability that an offspring is of type $i$, if its parents are of types $i$ and $j$, is invariant to multiplying $u_{ij}$ and $u_{ji}$ by some factor $\lambda$. This factor can be unique for different pairs of opposite elements. Thus, we are free to scale opposite elements in the payoff matrix without fear of altering the dynamics.
\end{enumerate}
%-------------------------------------------

It will turn out that a convenient normalization of the payoff matrix $\mathbf{U} \rightarrow \tilde{\mathbf{U}}$ is given by:
%-------------------------------------------
\begin{equation*}
\tilde{u}_{ij} := \frac{u_{ij}}{u_{ij} + u_{ji}}.
\end{equation*}
%-------------------------------------------
This normalization obeys the conditions set out in notes (1) and (2) above -- it is equivalent to setting all diagonal elements to $\tfrac{1}{2}$, and scaling off-diagonal opposite elements $u_{ij}$ and $u_{ji}$ by the factor $\lambda = (u_{ij} + u_{ji})^{-1}$ -- so that the trajectories of the population state under $\mathbf{U}$ will be identical to those under $\tilde{\mathbf{U}}$. The benefit of this normalization is that, for $i\neq j$, $\tilde{u}_{ij}$ is simply the probability that a pairing of a type $i$ agent and a type $j$ agent will yield offspring of type $i$.
%---------------------------------------------------------------------------
\subsection{Stage-to-stage dynamics}
%---------------------------------------------------------------------------
Assume that the population state at time $t$ is $\mathbf{p}_t = (p_{1t},p_{2t},p_{3t})$. Consider the possible pairings that could give rise to type 1 offspring, assuming that there are no mutations ($\mu = 0$):
\begin{itemize}
\item type 1 - type 1, giving rise to type $1$ offspring with probability 1 -- the probability that a random pairing is of this type is $p_{1t}^2$;
\item type 1 - type 2, giving rise to type $1$ offspring with probability $\tilde{u}_{12}$ -- the probability that a random pairing is of this type is $2p_{1t}p_{2t}$; and
\item type 1 - type 3, giving rise to type $1$ offspring with probability $\tilde{u}_{13}$ -- the probability that a random pairing is of this type is $2p_{1t}p_{3t}$.
\end{itemize}

In an infinite population, the proportion of each pairing possibility amongst all other possibilities corresponds exactly with its probability; also, for a given pairing type-combination, the proportion of such meetings that result in type 1 offspring corresponds exactly to the probability that a pairing of that type-combination yields type 1 offspring. Thus, in the case of an infinite population with no mutations, we have the following \emph{exact} update formula for $p_1$:
%-------------------------------------------
\begin{equation*}
p_{1,t+1} = p_{1t}^2 + 2p_{1t}p_{2t}\tilde{u}_{12} + 2p_{1t}p_{3t}\tilde{u}_{13}.\\
\end{equation*}
%-------------------------------------------
(For the finite population case, we would have to make use of expected value operators.)

The update formulae for $p_2$ and $p_3$ are reached similarly, and we arrive at the update formula for the population state as a whole:\\
%-------------------------------------------
\begin{equation}
\left\{
\begin{array}{l}
p_{1,t+1} = p_{1t}^2 + 2p_{1t}p_{2t}\tilde{u}_{12} + 2p_{1t}p_{3t}\tilde{u}_{13}\\
\\
p_{2,t+1} = p_{2t}^2 + 2p_{2t}p_{1t}\tilde{u}_{21} + 2p_{2t}p_{3t}\tilde{u}_{23}\\
\\
p_{3,t+1} = p_{3t}^2 + 2p_{3t}p_{1t}\tilde{u}_{31} + 2p_{3t}p_{2t}\tilde{u}_{32}.\end{array} \right. \label{update1}
\end{equation}\\
%-------------------------------------------

In Sec.\ \ref{sec:mu} we shall augment the above update formula to account for mutations; in that case, offspring of a given type can result from \emph{any} pairing. 
%---------------------------------------------------------------------------
\subsection{Fixed points and their stability}\label{sec:fps}
%---------------------------------------------------------------------------
A fixed point is a population state which, under the dynamics, persists through time.
%-------------------------------------------
\begin{mydef}
The population state $\mathbf{p}$ is a fixed point if
%-------------------------------------------
\begin{equation*}
\mathbf{p}_t = \mathbf{p} \quad \Rightarrow \quad \mathbf{p}_{t+1} = \mathbf{p}.
\end{equation*}
%-------------------------------------------
\end{mydef}
%-------------------------------------------
Note that the assumption of an infinite population allows us to ignore the stochastic element of the dynamics in characterizing such points.

In discussing the stability of fixed points, a key concept is that of `evolutionary stability' \cite{smith1982,eshel1996}. Loosely, a fixed point exhibits evolutionary stability if, after the introduction into the population of any group of agents of slightly different proportional make-up, the population state moves back to the initial population state (the fixed point). In other words, for all perturbations smaller than a given size, the population state eventually returns to the fixed point.

%-------------------------------------------
\begin{mydef}
Let $\mathbf{p}$ be a fixed point. $\mathbf{p}$ is \textit{evolutionarily stable} (hereafter `ES') if $\exists \; \varepsilon >0$ such that
%-------------------------------------------
\begin{equation*}
\mathbf{p}_t \in B_{\Delta^3}(\mathbf{p},\varepsilon) \quad \Rightarrow \quad \lim_{\tau \rightarrow \infty}\mathbf{p}_{t+\tau} = \mathbf{p}.
\end{equation*}
%-------------------------------------------
\end{mydef}
%-------------------------------------------
(Here, as hereafter, we denote by $B_{\Delta^n}(\mathbf{p},\varepsilon)$ the open ball on $\Delta^n$ of radius $\varepsilon$, centered at $\mathbf{p} \in \Delta^n$; that is, $B_{\Delta^n}(\mathbf{p},\varepsilon) := \left\{\mathbf{x}\in \Delta^n: ||\mathbf{x} - \mathbf{p}|| < \varepsilon \right\}$, where $||\cdot||$ is the Euclidean norm in $\mathbb{R}^n$.) 

Finally, we shall call a fixed point `globally attracting' if, for any initial population state away from the other fixed points, the population state converges to that fixed point.
%-------------------------------------------
\begin{mydef}
Let $\mathbf{p}$ be a fixed point, along with $\mathbf{p^1},\mathbf{p^2}, \ldots, \mathbf{p^m}$. $\mathbf{p}$ is \textit{globally attracting} if:
%-------------------------------------------
\begin{equation*}
\mathbf{p}_t \in \Delta^3\backslash \{\mathbf{p^1}, \ldots, \mathbf{p^m}\}  \quad \Rightarrow \quad  \lim_{\tau\rightarrow\infty}\mathbf{p}_{t+\tau} = \mathbf{p}.
\end{equation*}
%-------------------------------------------
\end{mydef}
%-------------------------------------------
It is clear that any globally attracting isolated fixed point is also ES.

We shall begin our identification and classification of the fixed points with the following trivial result:
%-------------------------------------------
\begin{thm}\label{pureStatesFixed_thm}
If $\mu = 0$, the population states $\mathbf{e_1}, \mathbf{e}_2$ and $\mathbf{e}_3$ are fixed points.
\end{thm}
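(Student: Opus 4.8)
The plan is to verify the fixed-point condition directly from the exact update formula \eqref{update1}, substituting each pure state in turn and confirming that it is left unchanged. Since the relevant definition requires only that $\mathbf{p}_t = \mathbf{p} \Rightarrow \mathbf{p}_{t+1} = \mathbf{p}$, and since the infinite-population assumption lets us treat \eqref{update1} as deterministic, the whole statement reduces to a substitution.

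Taking $\mathbf{e}_1 = (1,0,0)$ as the representative case, I would observe that $p_{2t} = p_{3t} = 0$ kills every cross term in \eqref{update1}: the first line collapses to $p_{1,t+1} = p_{1t}^2 = 1$, while the second and third lines give $p_{2,t+1} = p_{3,t+1} = 0$, so $\mathbf{p}_{t+1} = \mathbf{e}_1$. Because \eqref{update1} is symmetric under permutation of the indices $1,2,3$, the cases $\mathbf{e}_2$ and $\mathbf{e}_3$ follow mutatis mutandis, and I would simply note this rather than repeat the arithmetic.

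More illuminating than the algebra is the reason behind it, which I would state explicitly: in a population made up entirely of type-$i$ agents, every pairing is type-$i$ with type-$i$, and with $\mu = 0$ such a pairing produces type-$i$ offspring with certainty, so the population cannot leave its pure state. I expect no genuine obstacle here -- the authors themselves label the result trivial -- the only point needing even passing attention being that the image remains in $\Delta^3$, which is immediate since its single nonzero component equals $1$.
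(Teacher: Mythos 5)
Your proposal is correct and matches the paper's proof in substance: the paper gives exactly the verbal argument in your final paragraph (all pairings in a pure state are same-type, so with $\mu=0$ only that type is produced), and your explicit substitution into \eqref{update1} is just the algebraic rendering of the same observation. No issues.
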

%-------------------------------------------
%-------------------------------------------
\begin{proof}
If there are only type $i$ agents in the population ($\mathbf{p} = \mathbf{e}_i$), pairings comprise only type $i$ agents, and thus, with no mutations, only type $i$ offspring will be produced. This state will persist.
\end{proof}
%-------------------------------------------

We shall now consider the evolutionary stability of these `pure' states. We shall describe the conditions under which a pure state can be ES, and show that ES pure states are in fact globally attracting as well. We shall then show that at most one of the pure states can be ES, and describe the conditions under which none of the pure states is ES.
%-------------------------------------------
\begin{thm}\label{dominantES_thm}
The fixed point $\mathbf{p} = \mathbf{e}_i$ is ES if, and only if, $\tilde{u}_{ij} > \tfrac{1}{2} \; \; \forall \; \; j\neq i$ (equivalently, $u_{ij} > u_{ji} \; \; \forall \; \; j\neq i$). Moreover, if $\tilde{u}_{ij} > \tfrac{1}{2} \; \; \forall \; \; j\neq i$, then $\mathbf{e}_1, \mathbf{e}_2, \mathbf{e}_3$ are the only fixed points, and $\mathbf{p} = \mathbf{e}_i$ is globally attracting.
\end{thm}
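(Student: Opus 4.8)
The plan is to treat the three assertions in turn, all of which flow from a single algebraic simplification of the update map \eqref{update1}. Writing $a_{ij} := 2\tilde u_{ij} - 1$ and using the defining identity $\tilde u_{ij} + \tilde u_{ji} = 1$ (so that $a_{ij} = -a_{ji}$), substitution of $p_{it} = 1 - \sum_{j\neq i}p_{jt}$ into \eqref{update1} collapses each coordinate equation to the compact form
\begin{equation}
p_{i,t+1} = p_{it}\Bigl(1 + \sum_{j\neq i} a_{ij}\,p_{jt}\Bigr).
\end{equation}
The dominance hypothesis $\tilde u_{ij} > \tfrac12$ for all $j\neq i$ is exactly the statement $a_{ij} > 0$ for all $j\neq i$. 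I would fix $i=1$ without loss of generality.

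For the ES characterization I would linearize about $\mathbf{e}_1$ in the reduced coordinates $(p_2,p_3)$, which are valid local coordinates on $\Delta^3$ at the vertex with both axes pointing into the simplex. Direct differentiation of the compact form shows the Jacobian there is diagonal, $\mathrm{diag}(1+a_{21},\,1+a_{31}) = \mathrm{diag}(2\tilde u_{21},\,2\tilde u_{31})$, with real positive eigenvalues $2\tilde u_{j1}$. The assumption that no two opposite payoffs are equal guarantees $\tilde u_{j1}\neq\tfrac12$, so each eigenvalue lies strictly off the unit circle. By the standard linear-stability criterion for maps, $\mathbf{e}_1$ is locally asymptotically stable iff both eigenvalues lie in the open unit disk, i.e.\ iff $\tilde u_{j1} < \tfrac12$, equivalently $\tilde u_{1j} > \tfrac12$, for $j=2,3$; if some $\tilde u_{1j}<\tfrac12$ the corresponding eigenvalue exceeds $1$ and the vertex is unstable along the edge toward $\mathbf{e}_j$, hence not ES. This settles the ``if and only if''.

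Assuming dominance of type $1$, I would then show the vertices are the only fixed points. At any fixed point, every coordinate with $p_i\neq 0$ must satisfy $\sum_{j\neq i} a_{ij} p_j = 0$. For $i=1$ this reads $a_{12}p_2 + a_{13}p_3 = 0$ with $a_{12},a_{13}>0$ and $p_2,p_3\geq 0$, forcing $p_2=p_3=0$; hence any fixed point with $p_1>0$ is $\mathbf{e}_1$. On the opposite face $p_1=0$, the relation for $i=2$ (if $p_2\neq 0$) gives $a_{23}p_3=0$, and $a_{23}\neq 0$ (again by the no-ties assumption) forces $p_3=0$, i.e.\ $\mathbf{e}_2$; symmetrically the only remaining boundary fixed point is $\mathbf{e}_3$. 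Thus $\mathbf{e}_1,\mathbf{e}_2,\mathbf{e}_3$ exhaust the fixed points.

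Finally, for global attraction I would use the dominant coordinate itself as a Lyapunov function. Since $a_{12},a_{13}>0$, the compact form gives $p_{1,t+1}=p_{1t}(1+a_{12}p_{2t}+a_{13}p_{3t})\geq p_{1t}$, with equality only at a vertex; so whenever $p_{10}>0$ the sequence $p_{1t}$ is strictly increasing (until it reaches a vertex) and bounded above by $1$, hence converges to some $p_1^\ast\geq p_{10}>0$. The remaining, and I expect principal, step is to upgrade this scalar convergence to convergence of the whole state. For this I would telescope: $\sum_t (p_{1,t+1}-p_{1t}) = p_1^\ast - p_{10} < \infty$ forces the increment $p_{1t}(a_{12}p_{2t}+a_{13}p_{3t})\to 0$; since $p_{1t}\to p_1^\ast>0$ and both $a_{12}p_{2t},\,a_{13}p_{3t}\geq 0$, this yields $p_{2t},p_{3t}\to 0$, and therefore $\mathbf{p}_t\to\mathbf{e}_1$. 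This simultaneously delivers ES (any small ball about $\mathbf{e}_1$ has $p_1>0$) and convergence from every state with $p_1>0$. The one delicate point worth flagging is the invariant opposite face $\{p_1=0\}$, on which type $1$ is absent and (since $a_{23}\neq 0$) trajectories run to $\mathbf{e}_2$ or $\mathbf{e}_3$ rather than $\mathbf{e}_1$; global attraction is thus to be understood in the generic sense of convergence from every initial state in which the dominant type is present, $p_1>0$.
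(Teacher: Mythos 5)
Your proposal is correct, and it overlaps with the paper's proof at its core --- both arguments hinge on the factorization $p_{1,t+1}=p_{1t}\bigl(1+a_{12}p_{2t}+a_{13}p_{3t}\bigr)$ and the resulting monotonicity of the dominant coordinate --- but you reach several of the conclusions by different and, in places, more complete routes. The paper proves the ``if and only if'' entirely by monotonicity: under dominance $p_1$ strictly increases off the pure states, and when $\tilde u_{12}<\tfrac12$ it strictly decreases along the invariant edge $p_3=0$, killing evolutionary stability; you instead linearize at the vertex and read off the eigenvalues $2\tilde u_{21}, 2\tilde u_{31}$, which is equivalent here (the no-ties assumption makes the vertex hyperbolic) but note that for the ``only if'' direction the eigenvalue alone shows instability, not failure of ES --- it is your supplementary edge-monotonicity remark that actually rules out eventual return to $\mathbf{e}_1$. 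Your algebraic classification of fixed points is an improvement: the paper infers ``no other fixed points'' from the strict increase of $p_1$, an argument that only covers states with $p_1>0$ and silently skips the face $\{p_1=0\}$, whereas your relation $\sum_{j\neq i}a_{ij}p_j=0$ disposes of that face explicitly. Likewise, your telescoping-sum argument rigorously upgrades ``$p_{1t}$ increasing and bounded'' to $\mathbf{p}_t\to\mathbf{e}_1$, a step the paper simply asserts. Finally, your closing caveat is a genuine catch: on the invariant face $\{p_1=0\}$ trajectories converge to $\mathbf{e}_2$ or $\mathbf{e}_3$, so the theorem's global-attraction claim holds only for initial states with $p_1>0$, which is slightly weaker than the paper's Definition~3 as literally stated (the paper's proof tacitly assumes $p_{1t}\neq 0$).
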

%-------------------------------------------
%-------------------------------------------
\begin{proof}
($\Leftarrow$) Without loss of generality, assume $\tilde{u}_{12}, \tilde{u}_{13} > \tfrac{1}{2}$, and let the current population state be $\mathbf{p}_t = (p_{1t},p_{2t},p_{3t}) \neq \mathbf{e}_2,\mathbf{e}_3$. We shall demonstrate that $\displaystyle \lim_{\tau \rightarrow \infty} p_{1,t+\tau} = 1$; that is, the population returns to the pure state $\mathbf{e}_1$.

From \eqref{update1}, we have the following update formula for $p_1$:
\begin{align*}
p_{1,t+1} &= p_{1t}^2 + 2p_{1t}p_{2t}\tilde{u}_{12} + 2p_{1t}p_{3t}\tilde{u}_{13}\\
&= p_{1t}\left(p_{1t} + 2p_{2t}\tilde{u}_{12} + 2p_{3t}\tilde{u}_{13}\right)\\
&\geq p_{1t}\left(p_{1t} + 2p_{2t}(\tfrac{1}{2}) + 2p_{3t}(\tfrac{1}{2})\right)\\
&= p_{1t}\left(p_{1t} + p_{2t} + p_{3t}\right) = p_{1t}.
\end{align*}
We have equality in the third line if, and only if, $p_{2t} = p_{3t} = 0$ (we have already assumed $p_{1t} \neq 0$). So for all non-pure population states at time $t$, the $p_1$ component is strictly monotonically increasing over time, so none is a fixed point, and thus $\displaystyle \lim_{\tau \rightarrow \infty} p_{1,t+\tau} = 1$. The population state $\mathbf{e}_1$ is therefore globally attracting, and by implication, ES too.\\

($\Rightarrow$) Suppose, without loss of generality, that $\tilde{u}_{12} < \tfrac{1}{2}$. We shall show that the pure state $\mathbf{p} = \mathbf{e}_1$ is not resistant to the invasion of type 2 agents; in particular, we shall show that, for any population state $\mathbf{p}_t = \xi\mathbf{e}_2 + (1-\xi)\mathbf{e}_1$ where $\xi\in (0,1)$, the component $p_1$ decreases in time.

The update formula for the $p_1$ component of $\mathbf{p}$ in a population where $p_{1t},p_{2t} > 0$ and $p_{3t} = 0$ is given by:
\begin{align*}
p_{1,t+1} &= p_{1t}^2 + 2p_{1t}p_{2t}\tilde{u}_{12}\\
&= p_{1t}\left(p_{1t} + 2p_{2t}\tilde{u}_{12}\right)\\
&< p_{1t}\left(p_{1t} + 2p_{2t}(\tfrac{1}{2})\right)\\
&= p_{1t}\left(p_{1t} + p_{2t}\right) = p_{1t};
\end{align*}
that is, $p_1$ is strictly monotonically decreasing in time. So, any perturbation away from the initial state $\mathbf{p} = \mathbf{e}_1$ along that support of $\Delta^3$ where $p_3 = 0$ will ensure that the population state does not return to $\mathbf{e}_1$. Since any open ball $B_{\Delta^3}(\mathbf{e}_1,\varepsilon)$ contains such points, the pure state $\mathbf{p} = \mathbf{e}_1$ is not ES.
\end{proof}
%-------------------------------------------

Fig.\ \ref{fig1}(a) on pg.\ \pageref{fig1} contains a phase portrait showing some typical trajectories for the case of an ES pure state. In this `ternary diagram', as hereafter, the right, top and left vertices of the simplex correspond to the pure population states $\mathbf{e}_1, \mathbf{e}_2$ and $\mathbf{e}_3$ respectively. The scaling for each component of the population state is along the perpendicular dropped from the vertex corresponding to its pure state to the side opposite that vertex.

%-------------------------------------------
% Figure: typical trajectories
%-------------------------------------------
\begin{figure*}[t]
\centering
\includegraphics[width=155mm]{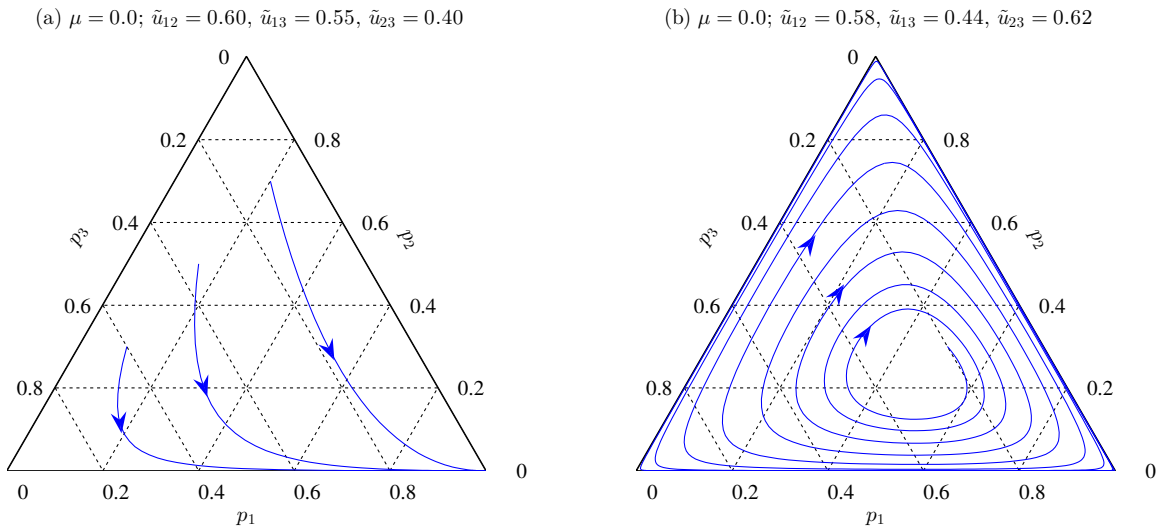}
\caption{A few representative trajectories for the case (a) where the existence of a dominant type leads to an ES pure state; and a typical trajectory for the case (b) where no dominant type exists, resulting in a locally-unstable interior fixed point.\label{fig1}}
\end{figure*}
%-------------------------------------------

The case $\tilde{u}_{12}, \tilde{u}_{13} > \tfrac{1}{2}$, which the previous theorem shows to be both necessary and sufficient for the pure state $\mathbf{p} = \mathbf{e}_1$ to be ES, does not permit any other pure state to be ES. (For $\mathbf{e}_2$ to be ES, the above theorem would require $\tilde{u}_{21} > \tfrac{1}{2}$, which contradicts $\tilde{u}_{12} > \tfrac{1}{2}$. The requirements for $\mathbf{e}_3$ to be ES lead to a similar contradiction.) So we can have at most one pure state ES fixed point. We shall call the type associated with such a state the `dominant type'; the conditions required of the payoff matrix for its existence and the result in Thm.\ \ref{dominantES_thm} justify this nomenclature.

It is interesting to note that the evolutionarily stable fixed point in this case need not correspond to a Nash equilibrium of the static game; indeed, the globally attracting pure state can even correspond to a strategy that is strictly dominated in the static game! Take, for example, the following payoff matrix:
\begin{equation*}
{\mathbf{U}} = \left( {\begin{array}{*{20}{c}}
   2 & 6 & 7  \\
   4 & 8 & 8  \\
   3 & 2 & 5  \\
 \end{array} } \right) .
\end{equation*}
Under our dynamics, the pure state of type 1 agents is globally attracting, despite the fact that strategy 1 is strictly dominated by strategy 2. This may be understood by noting that the competitive dynamics requires us to consider opposite elements in the payoff matrix rather than those above and below each other, as the traditional concept of dominance does.

Another consequence of Thm.\ \ref{dominantES_thm} is that there exist cases where no ES pure state exists. These are the cases where none of the types is dominant:
%-------------------------------------------
\begin{itemize}
\item[(i)]$\tilde{u}_{12} > \tfrac{1}{2}, \tilde{u}_{13} < \tfrac{1}{2}, \tilde{u}_{23} > \tfrac{1}{2};$
\item[(ii)]$\tilde{u}_{12} < \tfrac{1}{2}, \tilde{u}_{13} > \tfrac{1}{2}, \tilde{u}_{23} < \tfrac{1}{2}.$
\end{itemize}

%-------------------------------------------
A suitable permutation of the indices equates the two cases, so we can take, say, the first to be general. Here, type 1 `beats' type 2, which `beats' type 3, which `beats' type 1. In this case, the game mirrors the classic rock-paper-scissors game of the game theory literature -- see, for example, \citet{hofbauer1998}.

We now turn our attention to non-pure state fixed points. We know from Thm.\ \ref{dominantES_thm} that such fixed points are not possible if one of the fixed points is ES, so we need only consider the case where no ES pure state exists, the general form of which we have taken to be case (i) above.

The theorem below establishes that, in such a case, a unique non-pure state fixed point always exists, and that it lies in the interior of $\Delta^3$. The subsequent theorem will show that this interior fixed point can not be ES.
%-------------------------------------------
\begin{thm}
\label{mixedFixed_thm}
If $\tilde{u}_{12} > \tfrac{1}{2}, \tilde{u}_{13} < \tfrac{1}{2}, \tilde{u}_{23} > \tfrac{1}{2}$, then:
\begin{itemize}
\item[(a)] $\exists! \; \mathbf{p} \in \Delta^3\backslash\{\mathbf{e}_1,\mathbf{e}_2,\mathbf{e}_3\}$ such that $\mathbf{p}_t = \mathbf{p} \; \Rightarrow \; \mathbf{p}_{t+1} = \mathbf{p}$;
\item[(b)] $\mathbf{p} \in \text{\emph{int}}(\Delta^3)$.
\end{itemize}
\end{thm}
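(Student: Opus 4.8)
The plan is to classify every non-pure fixed point by location---open edges versus interior---and to show there are none on the edges and exactly one in the interior, which settles (a) and (b) simultaneously. At a fixed point each coordinate obeys $p_i = p_i\big(p_i + \sum_{j\neq i} 2 p_j \tilde{u}_{ij}\big)$ from \eqref{update1}; wherever $p_i>0$ I may cancel the factor $p_i$, leaving the \emph{linear} conditions $p_i + \sum_{j\neq i}2p_j\tilde{u}_{ij} = 1$. The whole argument rests on the normalization identity $\tilde{u}_{ij}+\tilde{u}_{ji}=1$, which I will exploit repeatedly to collapse these conditions into relations whose coefficients are governed only by the signs of $2\tilde{u}_{ij}-1$---that is, by the rock--paper--scissors orientation of the hypothesis.

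First I would dispose of the open edges. On the edge $p_3=0$, $p_1,p_2>0$, the cancelled condition for $p_1$ reads $p_1 + 2p_2\tilde{u}_{12}=1$; substituting $p_1 = 1-p_2$ gives $(2\tilde{u}_{12}-1)p_2 = 0$, and since $\tilde{u}_{12}>\tfrac{1}{2}$ this forces $p_2=0$, a contradiction. The identical computation on the other two edges invokes $\tilde{u}_{13}<\tfrac{1}{2}$ and $\tilde{u}_{23}>\tfrac{1}{2}$ respectively, and again forces a coordinate to vanish. (This is really the monotonicity of Thm.~\ref{dominantES_thm} applied inside each two-type subpopulation, where one type strictly dominates the other.) Hence no non-pure fixed point lies on $\partial\Delta^3$.

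For the interior I would solve the linear system directly. Writing $p_1 = 1-p_2-p_3$ in the $p_1$-condition and using $\tilde{u}_{21}=1-\tilde{u}_{12}$ etc.\ yields $(2\tilde{u}_{12}-1)p_2 = (1-2\tilde{u}_{13})p_3$, and the $p_2$-condition similarly yields $(2\tilde{u}_{12}-1)p_1 = (2\tilde{u}_{23}-1)p_3$ (the $p_3$-condition is redundant, since the update preserves $\sum_i p_i$ and so the third deviation is minus the sum of the first two). Under the hypothesis every bracket is strictly positive, so these express $p_1$ and $p_2$ as strictly positive multiples of $p_3$; the normalization $p_1+p_2+p_3=1$ then pins down $p_3$ uniquely and positively, the denominator being $2\tilde{u}_{12}+2\tilde{u}_{23}-2\tilde{u}_{13}-1>0$. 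This gives a unique interior solution with all coordinates strictly positive.

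Combining the two parts, every non-pure fixed point must be interior, and the interior one is unique; this proves (a) and (b) together. I do not anticipate a serious obstacle, as the computation is routine once set up; the only points requiring care are (i) remembering to rule out the open edges separately---omitting them would leave the uniqueness claim in (a) unproven---and (ii) verifying that the sign pattern $\tilde{u}_{12}>\tfrac{1}{2}$, $\tilde{u}_{13}<\tfrac{1}{2}$, $\tilde{u}_{23}>\tfrac{1}{2}$ is precisely what renders each bracket positive and hence the unique solution admissible. The conceptual crux is recognizing that $\tilde{u}_{ij}+\tilde{u}_{ji}=1$ linearizes the factored system and reduces the entire question to these sign conditions.
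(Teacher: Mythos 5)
Your proposal is correct and follows essentially the same route as the paper's proof: cancel the positive factor $p_i$ to linearize the fixed-point equations, use $\tilde{u}_{ij}+\tilde{u}_{ji}=1$ and the normalization $\sum_i p_i=1$ to solve the resulting linear system and read off positivity from the sign pattern of the hypothesis, and separately rule out fixed points on the open edges by the same cancellation argument. Your presentation of $p_1$ and $p_2$ as explicitly positive multiples of $p_3$ is a slightly cleaner way of seeing interiority than the paper's sign check of numerator and denominator in \eqref{intfixed1}, but the two computations are the same.
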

%-------------------------------------------

%-------------------------------------------
\begin{proof}
We shall derive the conditions for such a fixed point to exist, and demonstrate that a unique point in $\Delta^3\backslash\{\mathbf{e}_1,\mathbf{e}_2,\mathbf{e}_3\}$ fulfils these conditions. Assume that $\mathbf{p} \in \text{int}(\Delta^3)$ is our candidate for an interior fixed point. (We shall consider the case where $\mathbf{p} \in \text{bd}(\Delta^3)\backslash\{\mathbf{e}_1,\mathbf{e}_2,\mathbf{e}_3\}$ later, and demonstrate that such a point can not be fixed.) Applying the update formula for $\mathbf{p}_{t+1}$, and substituting in the fixed point condition that $\mathbf{p}_{t+1} = \mathbf{p}_t = \mathbf{p}$, we arrive at the following system of equations:
%-------------------------------------------
\begin{align*}
&\left\{
\begin{array}{l}
p_{1} = p_{1}^2 + 2p_{1}p_{2}\tilde{u}_{12} + 2p_{1}p_{3}\tilde{u}_{13}\\
p_{2} = p_{2}^2 + 2p_{2}p_{1}\tilde{u}_{21} + 2p_{2}p_{3}\tilde{u}_{23}\\
p_{3} = p_{3}^2 + 2p_{3}p_{1}\tilde{u}_{31} + 2p_{3}p_{2}\tilde{u}_{32}\\
\end{array} \right.\\
\Rightarrow \; \; &\left\{
\begin{array}{l}
1 = p_{1} + 2p_{2}\tilde{u}_{12} + 2p_{3}\tilde{u}_{13}\\
1 = p_{2} + 2p_{1}\tilde{u}_{21} + 2p_{3}\tilde{u}_{23}\\
1 = p_{3} + 2p_{1}\tilde{u}_{31} + 2p_{2}\tilde{u}_{32}\\
\end{array} \right.\\
\Rightarrow \; \; &2\tilde{\mathbf{U}}\mathbf{p} = (1,1,1),
\end{align*}
%-------------------------------------------
where we used the fact that the components of $\mathbf{p}$ are nonzero in dividing them out in the second step. Taking into account the conditions $p_{1}+p_{2}+p_{3} = 1$ and $\tilde{u}_{ij} + \tilde{u}_{ji} = 1$, we arrive at the following linear system:
%-------------------------------------------
\begin{equation*}
\left(
\begin{array}{cccc}
1/2 							& \tilde{u}_{12} 		& \tilde{u}_{13} \\
1-\tilde{u}_{12} 	& 1/2 							& \tilde{u}_{23} \\
1-\tilde{u}_{13} 	& 1-\tilde{u}_{23} 	& 1/2				\\
1									&	1									&	1	
\end{array} \right)\left(
\begin{array}{c}
p_{1}\\
p_{2}\\
p_{3}
\end{array} \right) = \left(
\begin{array}{c}
1/2\\
1/2\\
1/2\\
1
\end{array} \right).
\end{equation*}\\
%-------------------------------------------
%-------------------------------------------
The above system turns out to be consistent; the unique solution is given by:
%-------------------------------------------
\begin{align}
p_{1} = \frac{1-2\tilde{u}_{23}}{1-2(\tilde{u}_{12} - \tilde{u}_{13} + \tilde{u}_{23})}, \nonumber\\
p_{2} = \frac{2\tilde{u}_{13} - 1}{1-2(\tilde{u}_{12} - \tilde{u}_{13} + \tilde{u}_{23})} \label{intfixed1},\\
p_{3} = \frac{1-2\tilde{u}_{12}}{1-2(\tilde{u}_{12} - \tilde{u}_{13} + \tilde{u}_{23})}. \nonumber
\end{align}
%-------------------------------------------
It remains to be shown that $p_{1},p_{2},p_{3}>0$. Note that $\tilde{u}_{12} - \tilde{u}_{13} + \tilde{u}_{23} > \tfrac{1}{2}$, so that $1-2(\tilde{u}_{12} - \tilde{u}_{13} + \tilde{u}_{23}) < 0$. We thus require the numerators above to be negative; it is readily verified that this is the case. Thus, $\mathbf{p}\in \text{int}(\Delta^3)$, and so, under the assumptions concerning $\tilde{\mathbf{U}}$ that forbid an ES pure state, there always exists a unique fixed point in the interior of $\Delta^3$; it is given by \eqref{intfixed1}.

Finally, we consider the case where $\mathbf{p} \in \text{bd}(\Delta^3)\backslash\{\mathbf{e}_1,\mathbf{e}_2,\mathbf{e}_3\}$. Assume, without loss of generality, that $p_3 = 0$, so that $p_1, p_2 > 0$, with $p_1 + p_2 = 1$. Our system of equations incorporating the fixed point condition is then 
\begin{equation*}
\left\{
\begin{array}{l}
p_{1} = p_{1}^2 + 2p_{1}p_{2}\tilde{u}_{12}\\
p_{2} = p_{2}^2 + 2p_{2}p_{1}\tilde{u}_{21}\\
1 = p_1 + p_2\\
\end{array} \right.\\
\Rightarrow \; \; \left\{
\begin{array}{l}
(1-2\tilde{u}_{12})p_1 = 1-2\tilde{u}_{12}\\
(1-2\tilde{u}_{12})p_1 = 0.\\
\end{array} \right.\\
\end{equation*}
This is consistent only if $\tilde{u}_{12} = \tfrac{1}{2}$, which we have assumed not to be the case. Thus, the only fixed points on the boundary of $\Delta^3$ are the pure states.
\end{proof}
%-------------------------------------------

So we have found the conditions under which an interior fixed point exists; it remains to characterize the stability of this interior fixed point. The following result establishes that the interior fixed point is never ES.
%-------------------------------------------
\begin{thm}\label{mixedNotES_thm}
If an interior fixed point $\mathbf{p}$ exists, then $\forall \; \varepsilon>0, \exists \; {\mathbf{p}^\prime} \in B_{\Delta^3}(\mathbf{p},\varepsilon)$ such that $\displaystyle \mathbf{p}_t = {\mathbf{p}^\prime} \; \Rightarrow \; \lim_{\tau\rightarrow \infty}\mathbf{p}_{t+\tau} \neq \mathbf{p}$.
\end{thm}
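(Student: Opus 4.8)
The plan is to linearize the update map \eqref{update1} about the interior fixed point $\mathbf{p}$ and to show that the linearization is expanding, so that $\mathbf{p}$ cannot be locally attracting. First I would write \eqref{update1} as $\mathbf{p}_{t+1} = \mathbf{F}(\mathbf{p}_t)$ and compute its Jacobian $\mathbf{J} = D\mathbf{F}(\mathbf{p})$. The off-diagonal entries are simply $2p_i\tilde{u}_{ij}$, and the diagonal entries $2p_i + 2\sum_{j\neq i}\tilde{u}_{ij}p_j$ collapse using the fixed-point relations derived in the proof of Thm.\ \ref{mixedFixed_thm} (namely $1 = p_i + 2\sum_{j\neq i}\tilde{u}_{ij}p_j$), leaving $1+p_i$ on the diagonal. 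The upshot is the compact form $\mathbf{J} = \mathbf{I} + 2\,\mathrm{diag}(p_1,p_2,p_3)\,\tilde{\mathbf{U}}$.

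Because $\mathbf{F}$ maps $\Delta^3$ into itself, the dynamics live on the two-dimensional tangent space $T = \{\mathbf{v}: \sum_i v_i = 0\}$, and the relevant spectrum is that of $\mathbf{J}$ restricted to $T$. The bookkeeping I expect to matter most is the removal of the spurious transverse eigenvalue inherited from the embedding in $\mathbb{R}^3$: a short computation using $\sum_i p_i\tilde{u}_{ij} = \tfrac{1}{2}$ (itself a consequence of the fixed-point equations, since $\tilde{u}_{ij}+\tilde{u}_{ji}=1$) shows that $\mathbf{J}$ acts on the quotient $\mathbb{R}^3/T$ by multiplication by $2$, so $2$ is an eigenvalue and the remaining eigenvalues $\lambda_1,\lambda_2$ are exactly those of $\mathbf{J}|_T$. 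Since $\mathrm{tr}\,\mathbf{J} = \sum_i(1+p_i) = 4$, I immediately obtain $\lambda_1 + \lambda_2 = 4 - 2 = 2$.

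This trace identity already does most of the work: from $|\lambda_1| + |\lambda_2| \ge |\lambda_1+\lambda_2| = 2$, the two eigenvalues cannot both lie strictly inside the unit circle, so $\mathbf{p}$ could be stable only in the borderline case $\lambda_1 = \lambda_2 = 1$. To exclude that degenerate possibility, and to pin down that $\mathbf{p}$ is in fact a spiral \emph{source}, I would compute $\lambda_1\lambda_2 = \det(\mathbf{J}|_T) = \tfrac{1}{2}\det\mathbf{J}$. Writing $\mathbf{M} = \mathbf{J}-\mathbf{I} = 2\,\mathrm{diag}(\mathbf{p})\,\tilde{\mathbf{U}}$, whose eigenvalues are $1,\lambda_1-1,\lambda_2-1$ with $(\lambda_1-1)+(\lambda_2-1)=0$, reduces everything to $\det\mathbf{M} = 8p_1p_2p_3\det\tilde{\mathbf{U}}$, and hence $\lambda_1\lambda_2 = 1 + 8p_1p_2p_3\det\tilde{\mathbf{U}}$. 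The one genuine calculation is $\det\tilde{\mathbf{U}}$; I expect it to collapse to the perfect square $\det\tilde{\mathbf{U}} = \tfrac{1}{2}(\tilde{u}_{12}-\tilde{u}_{13}+\tilde{u}_{23}-\tfrac{1}{2})^2$, which is strictly positive in the rock-paper-scissors regime precisely because $\tilde{u}_{12}-\tilde{u}_{13}+\tilde{u}_{23} > \tfrac{1}{2}$ (established in Thm.\ \ref{mixedFixed_thm}). Thus $\lambda_1\lambda_2 > 1$ with $\lambda_1+\lambda_2 = 2$, forcing $\lambda_{1,2} = 1 \pm i\beta$ with $|\lambda_{1,2}| = \sqrt{1+\beta^2} > 1$.

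With both eigenvalues strictly outside the unit circle, $\mathbf{p}$ is a linear repeller, and I would finish by invoking the standard consequence (an adapted-norm argument, or the unstable-manifold/Hartman--Grobman theorem) that a fixed point whose Jacobian has spectral radius $>1$ is not asymptotically stable: there is a neighborhood every non-fixed point of which eventually leaves, so in particular every ball $B_{\Delta^3}(\mathbf{p},\varepsilon)$ contains a $\mathbf{p}'$ whose forward orbit does not converge to $\mathbf{p}$ --- exactly the assertion of the theorem. I anticipate two places where care is needed rather than cleverness: correctly quotienting out the transverse eigenvalue so that the $2\times2$ tangential spectrum (not the full $3\times3$ spectrum) governs stability, and the $\det\tilde{\mathbf{U}}$ computation. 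The latter is the conceptual crux: a priori one might fear a neutral center, as arises for the interior fixed point of the continuous-time zero-sum replicator dynamics, and it is the strict positivity of $\det\tilde{\mathbf{U}}$ --- its being a nonzero perfect square --- that rules the center out and upgrades neutrality to genuine instability.
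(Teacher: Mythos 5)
Your proposal is correct, and it reaches exactly the paper's conclusion by the same overall strategy --- linearize at the interior fixed point and show the Jacobian's eigenvalues are a complex pair $1\pm i\beta$ lying strictly outside the unit circle --- but the route to the eigenvalues is genuinely different and, frankly, more illuminating. The paper eliminates $p_3$, writes the map as a pair $(F_1,F_2)$ in $(p_1,p_2)$, and states that ``with some effort'' the $2\times2$ Jacobian's eigenvalues are $\lambda_{1,2}=1\pm i\sqrt{\omega}$ with $\omega=-(2\tilde u_{12}-1)(1-2\tilde u_{13})(2\tilde u_{23}-1)/D$; the structure of the answer is left unexplained. You instead keep the full $3\times3$ Jacobian, which the fixed-point relations $2\tilde{\mathbf{U}}\mathbf{p}=\mathbf{1}$ collapse to $\mathbf{J}=\mathbf{I}+2\,\mathrm{diag}(\mathbf{p})\tilde{\mathbf{U}}$, split off the transverse eigenvalue $2$ via the constant column sums (your identity $\mathbf{p}^{T}\tilde{\mathbf{U}}=\tfrac12\mathbf{1}^{T}$ follows from $\tilde{\mathbf{U}}+\tilde{\mathbf{U}}^{T}=\mathbf{1}\mathbf{1}^{T}$ and does hold), and then read off $\lambda_1+\lambda_2=2$ from the trace and $\lambda_1\lambda_2=1+8p_1p_2p_3\det\tilde{\mathbf{U}}$ from the determinant. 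Your formula $\det\tilde{\mathbf{U}}=\tfrac12(\tilde u_{12}-\tilde u_{13}+\tilde u_{23}-\tfrac12)^2$ is correct (write $\tilde{\mathbf{U}}=\tfrac12\mathbf{1}\mathbf{1}^{T}+\mathbf{A}$ with $\mathbf{A}$ antisymmetric and use the rank-one-update identity with $\mathrm{adj}(\mathbf{A})=\mathbf{w}\mathbf{w}^{T}$), and substituting the explicit fixed point shows your $\beta^2=8p_1p_2p_3\det\tilde{\mathbf{U}}$ equals the paper's $\omega$ exactly. What your version buys is an explanation of two facts the paper's computation merely exhibits: the real part of the eigenvalues is exactly $1$ because of the trace identity $\mathrm{tr}\,\mathbf{J}=4$, and the a priori worry of a neutral center (as in zero-sum replicator dynamics) is excluded because $\det\tilde{\mathbf{U}}$ is a strictly positive perfect square in the rock-paper-scissors regime. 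The concluding step --- spectral radius of $\mathbf{J}|_T$ exceeding $1$ implies the fixed point is not asymptotically stable --- is the same standard linearization appeal the paper makes.
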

%-------------------------------------------

%-------------------------------------------
\begin{proof}
Suppose, without loss of generality, that $\tilde{u}_{12} > \tfrac{1}{2}, \tilde{u}_{13} < \tfrac{1}{2}, \tilde{u}_{23} > \tfrac{1}{2}$, so that we have the interior fixed point given by \eqref{intfixed1}, which we call $\mathbf{p}\in \text{int}(\Delta^3)$. 

Taking into account the restrictions that the proportions sum to unity, and that opposite elements of $\tilde{\mathbf{U}}$ sum to unity, the third equation in \eqref{update1} becomes redundant, and the system may be written in two dimensions:
\begin{align}
p_{1,t+1}&= p_{1t}\left[p_{1t}(1-2\tilde{u}_{13}) + 2p_{2t}(\tilde{u}_{12} -\tilde{u}_{13}) + 2\tilde{u}_{13}\right] \nonumber \\
&=: F_1(p_{1t},p_{2t});\nonumber\\
p_{2,t+1}&= p_{2t}\left[2p_{1t}(1-\tilde{u}_{12} - \tilde{u}_{23}) + p_{2t}(1-2\tilde{u}_{23}) + 2\tilde{u}_{23}\right] \nonumber\\
&=: F_2(p_{1t},p_{2t}). \nonumber
\end{align}
$\mathbf{p}$, understood in this context to be the vector comprising the first two elements of the original three-element $\mathbf{p}$, is a fixed point of this system. To determine the stability of this fixed point, we employ the usual linearization method \cite[e.g. Ref.][]{Kocic:1993}: if we can show that at least one of the eigenvalues of the Jacobian matrix $\partial (F_1,F_2)/\partial (p_{1t},p_{2t})$, evaluated at the fixed point, has modulus greater than unity, then the fixed point is unstable. Defining  $D:= 1-2(\tilde{u}_{12}-\tilde{u}_{13}+\tilde{u}_{23}) < 0$, these eigenvalues may, with some effort, be shown to be:
%-------------------------------------------
\begin{equation*}
\lambda_{1,2} = 1 \pm i\sqrt{\frac{(2\tilde{u}_{12} - 1)(1-2\tilde{u}_{13})(2\tilde{u}_{23} - 1)}{-D}} = 1 \pm i\sqrt{\omega},
\end{equation*}
%-------------------------------------------
where $0<\omega := -(2\tilde{u}_{12} - 1)(1-2\tilde{u}_{13})(2\tilde{u}_{23} - 1)/D<1$. So, the eigenvalues are distinct and complex, with modulus greater than unity:
%-------------------------------------------
\begin{equation*}
|\lambda_{1,2}|^2 = 1 + \omega > 1.
\end{equation*}
%-------------------------------------------
The fixed point is therefore locally unstable, i.e.\ not ES, and the trajectories nearby are outward spirals. On the unit simplex $\Delta^3$, these trajectories translate into outward spirals.
\end{proof}
%-------------------------------------------

A typical outward-spiraling trajectory is presented in Fig.\ \ref{fig1}(b).

A numerical investigation such as that carried out in the following section (and explained in greater detail there) reveals this outward spiral to have as a limit cycle the boundary of the simplex. The limiting behavior of the system, then, is oscillatory: the proportion of type 1 agents increases to nearly unity (with a near zero proportion of type 2 and type 3 agents), after which the proportion of type 2 agents increases to nearly unity, after which the proportion of type 3 agents increases to nearly unity, and so on. This oscillatory long-run behavior of the population state is similar to that under the replicator dynamics for rock-paper-scissors games \cite{hofbauer1998}; this is unsurprising because, as previously noted, the payoff logic behind the current case in our dynamics mirrors that of the rock-paper-scissors game.

Under some observational learning interpretations of our model, this oscillatory behavior might be regarded as unrealistic: rational agents should eventually realize that the system is locked in a perpetual cycle, so that more sophisticated learning rules would be needed for the system to converge \cite{fudenberg1998}.

%--------------------------------------------------------------------------------------------------
\section{The case of a nonzero mutation rate} \label{sec:mu}
%--------------------------------------------------------------------------------------------------
We now consider the effect of a nonzero mutation rate $\mu > 0$, maintaining our assumptions of an infinite population and non-discriminatory selection. The salient difference between this case and the previous case, where we assumed no mutations, is that type $i$ offspring can result from \emph{any} pairing. There are, for example, two ways in which a type 1 offspring can result from the pairing of a type 1 agent with a type 2 agent: the parents can initially produce type 1 offspring (probability $\tilde{u}_{12}$) which do not subsequently mutate (probability $1-\mu$), or they can initially produce type 2 offspring (probability $\tilde{u}_{21}$) which mutate into type 1 offspring (probability $\mu/2$, since it is equally likely that they will mutate into type 3 agents). Considering all possible pairings by which offspring of each type may be produced, and again relying on the infinite population assumption, we obtain the following (simplified) update formula for the population state:
%-------------------------------------------
\begin{equation}
\left\{
\begin{array}{l}
p_{1,t+1} = \left(1-\frac{3\mu}{2}\right)\left(p_{1t}^2 + 2p_{1t}p_{2t}\tilde{u}_{12} + 2p_{1t}p_{3t}\tilde{u}_{13}\right) + \frac{\mu}{2}\\
\\
p_{2,t+1} = \left(1-\frac{3\mu}{2}\right)\left(p_{2t}^2 + 2p_{1t}p_{2t}\tilde{u}_{21} + 2p_{2t}p_{3t}\tilde{u}_{23}\right) + \frac{\mu}{2}\\
\\
p_{3,t+1} = \left(1-\frac{3\mu}{2}\right)\left(p_{3t}^2 + 2p_{1t}p_{3t}\tilde{u}_{31} + 2p_{2t}p_{3t}\tilde{u}_{32}\right) + \frac{\mu}{2}. \end{array} \right. \label{update_mut}
\end{equation}\\
%-------------------------------------------

In a biological context, we can expect mutation rates to be very low, i.e.\ $\mu\ll1$ \cite[see e.g.][]{Nachman:2000,Haupt:2004}; here, the weakest restriction we shall assume is $\mu < 2/3$, though typical mutation rates will be much lower. Note that if $\mu>2/3$, we have a perverse scenario where complete dominance of one type in a given period (say, $p_{1t} = 1$) results in that type being the minority in the next period ($p_{1,t+1} < p_{2,t+1},p_{3,t+1}$). We shall see that the nonzero mutation rate renders finding and characterizing the fixed points more complicated than before -- for example, the $\mu/2$ term at the far right of each equation in \eqref{update_mut} prevents us from factorizing $p_{it}$, if nonzero, out of the right hand side of the $i^\textrm{th}$ equation.

Before a more rigorous analysis, we might make some initial remarks and hypotheses regarding the fixed points of the system, and their stability, under \eqref{update_mut}:
%-------------------------------------------
\begin{enumerate}
\item The pure states $\mathbf{p} = \mathbf{e}_i$ are no longer fixed points: a population comprising only type $i$ agents, will, in the initial reproduction stage, produce only type $i$ agents, but will see a proportion $\mu$ of these offspring mutate to other types, rendering the resultant population state `mixed'.
\item If $\tilde{U}$ is such that $\mathbf{e}_i$ would be an ES fixed point in the case without mutations (Thm.\ \ref{dominantES_thm}), we might expect there to be an ES fixed point (or at least an invariant set) near $\mathbf{e}_i$ under the dynamics that allows for mutations.
\item If $\tilde{U}$ is such that there would be an interior fixed point in the case without mutations (Thm.\ \ref{mixedFixed_thm}), then we might expect a similar fixed point in the case with mutations. It is not clear whether this point, if it were to exist, would be stable or not.
\end{enumerate}
%-------------------------------------------

We begin the more detailed analysis by rewriting \eqref{update_mut}, taking into account the restrictions $p_1 + p_2 + p_3 = 1$ and $\tilde{u}_{ij} = 1-\tilde{u}_{ji}$; under these restrictions, the third equation turns out to be solved identically if the first two are solved, and the system simplifies to:
%-------------------------------------------
\begin{equation}
\left\{
\begin{array}{l}
p_{1,t+1} - p_{1t} = p_{1t}\left(1-\frac{3\mu}{2}\right)\Big((1-p_{1t})(2\tilde{u}_{13} - 1)\Big.\\
\textcolor{white}{p_{1,t+1} - p_{1t} =}\frac{\mu}{2}\left(1-3p_{1t}\right)+\Big.2p_{2t}(\tilde{u}_{12} - \tilde{u}_{13})\Big)\\
\\
p_{2,t+1} - p_{2t} = p_{2t}\left(1-\frac{3\mu}{2}\right)\Big(2p_{1t}(1-\tilde{u}_{12} - \tilde{u}_{23})\Big.\\
\textcolor{white}{p_{2,t+1} - p_{2t} =}\frac{\mu}{2}\left(1-3p_{2t}\right) + \Big.(1-p_{2t})(2\tilde{u}_{23}-1)\Big).\\
\end{array} \right.  \label{general_mut}
\end{equation}

%-------------------------------------------
%---------------------------------------------------------------------------
\subsection{Existence of feasible fixed points}
%---------------------------------------------------------------------------

For a fixed point, the left hand sides of the above equations are zero; this leaves a system of two non-homogeneous quadratic forms in $p_{1}$ and $p_{2}$ (we may drop the time indices):
%-------------------------------------------

\begin{equation}
\left\{
\begin{array}{l}
\left[- \left(1-\frac{3\mu}{2}\right)\left(2\tilde{u}_{13} - 1\right) -2\left(1-\frac{3\mu}{2}\right)(\tilde{u}_{12}-\tilde{u}_{13})p_2\right.\\
\;\;+\frac{3\mu}{2} \left.\right]p_1+\left(1-\frac{3\mu}{2}\right)\left(2\tilde{u}_{13} - 1\right)p_1^2 - \frac{\mu}{2} = 0\\
\\
\left[- \left(1-\frac{3\mu}{2}\right)\left(2\tilde{u}_{23} - 1\right) -2\left(1-\frac{3\mu}{2}\right)(1-\tilde{u}_{12} - \tilde{u}_{23})p_1\right.\\
\;\;+\frac{3\mu}{2} \left.\right]p_2+\left(1-\frac{3\mu}{2}\right)\left(2\tilde{u}_{23} - 1\right)p_2^2 - \frac{\mu}{2} = 0.\\
\end{array} \right.   \label{quad_mut}
\end{equation}
%-------------------------------------------

From this system, we expect four solutions for $\mathbf{p} = (p_1,p_2,1-p_1-p_2)$, though of course these need not lie on the unit simplex; this would require $p_1 \geq 0, p_2 \geq 0, p_1+p_2\leq1$.

Though analytical results have been derived, these results are generally intractable (involving enormous expansions which do not seem amenable to simplification even with the aid of a computer algebra system) and thus we shall present numerical results only.

For each setup ($\tilde{u}_{12},\tilde{u}_{13},\tilde{u}_{23},\mu$), we ask two questions:
%-------------------------------------------
\begin{enumerate}
\item How many, if any, fixed points lie on the unit simplex $\Delta^3$?
\item For each of the fixed points that do lie on $\Delta^3$, is the fixed point stable or unstable in terms of the definitions set out earlier?
\end{enumerate}
%-------------------------------------------

To solve for the fixed points of the system, we note that the above system can be written in the general form
%-------------------------------------------
\begin{equation}
\left\{
\begin{array}{l}
a_1 + a_2p_1 + a_3p_1p_2 + a_4p_1^2 = 0\\
b_1 + b_2p_2 + b_3p_1p_2 + b_4p_2^2 = 0.\\
\end{array} \right.   \label{quadgen}
\end{equation}

%-------------------------------------------
Assume for now that none of the $p_i,a_i,b_i$ is zero. Then we may solve for $p_2$ in the first equation in \eqref{quadgen}:
%-------------------------------------------
\begin{equation}
p_2 = -\frac{1}{a_3}\left(\frac{a_1}{p_1} + a_2 + a_4p_1\right). \label{p_2sub}
\end{equation}
%-------------------------------------------
Substituting this result into the second equation in \eqref{quadgen} and simplifying yields a quartic polynomial in $p_1$:
%-------------------------------------------
\begin{align}
&a_1^2b_4 + (2a_1a_2b_4 - a_1a_3b_2)p_1 + (a_3^2b_1 - a_2a_3b_2 - a_1a_3b_3\nonumber\\
 &+ a_2^2b_4 + 2a_1a_4b_4)p_1^2 + (2a_2a_4b_4 - a_3a_4b_2 - a_2a_3b_3)p_1^3 \nonumber\\
&+ (a_4^2b_4 - a_3a_4b_3)p_1^4 = 0. \label{quartic}
\end{align}
%-------------------------------------------
For each set of coefficients, \eqref{quartic} has four (not necessarily distinct, and not necessarily real) solutions, each of which corresponds to a solution for $p_2$, as determined by \eqref{p_2sub}. $p_3$ is then determined by $p_3 = 1 - p_1 - p_2$.

In this way, using a numerical approach (e.g.\ finding the eigenvalues of the polynomial's characteristic matrix) we may efficiently find, for each combination of parameter values, the roots of \eqref{quad_mut}. We may then investigate whether any of these roots lies on the unit simplex.

We looped the elements of $\tilde{\mathbf{U}}$ using a step size of $0.001$, making sure that none of the assumptions was violated, and we looped $\mu$ between -- but not including -- its assumed limits $(0,2/3)$ using a step size of $0.001$. The key result of this investigation is that, for each setup ( exactly one solution (i.e.\ one fixed point of the system) lies on $\Delta^3$. Moreover, in each case, that solution lies in the \emph{interior} of $\Delta^3$.
%---------------------------------------------------------------------------
\subsection{Stability}
%---------------------------------------------------------------------------
For each setup, we have a unique fixed point on the unit simplex. We may analyze the stability of the fixed point making use of the usual Jacobian analysis, as we did in Sec.\ \ref{sec:fps} - here, though, our analysis will be numerical. For each setup, we shall numerically calculate the eigenvalues of the Jacobian matrix, evaluated at the feasible fixed point. 
%---------------------------------------------------------
\subsubsection{The case of a dominant type}
%---------------------------------------------------------

%-------------------------------------------
% Figure: typical trajectories
%-------------------------------------------
\begin{figure*}[t]
\centering
\includegraphics[width=155mm]{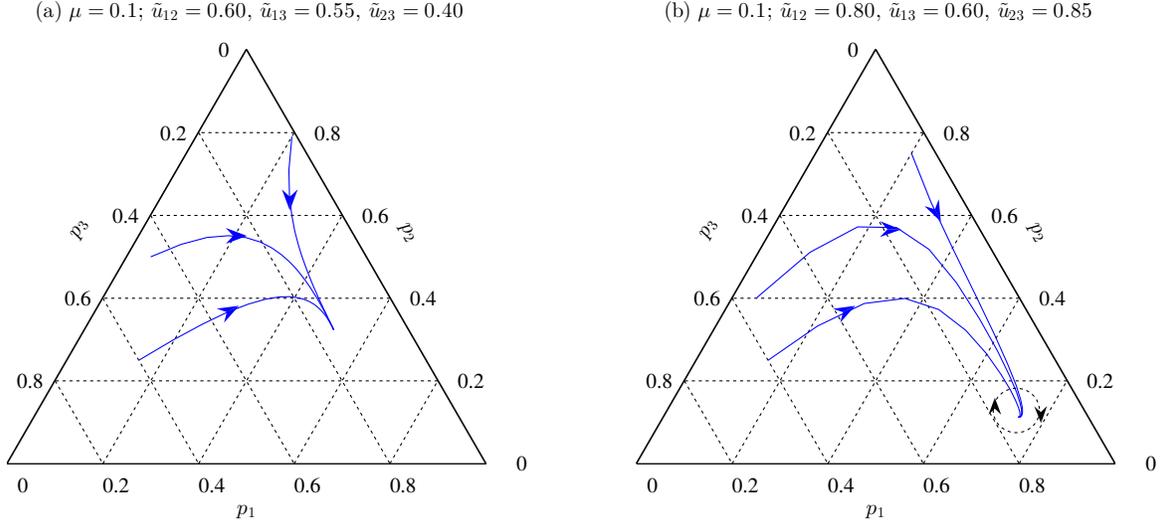}		
\caption{The two types of behavior which can arise in the case of a dominant type. Panel (a) shows typical trajectories for the case where the eigenvalues of the Jacobian, evaluated at the fixed point, are real, with magnitude less than unity; panel (b) shows typical trajectories for the case where the eigenvalues of the Jacobian, evaluated at the fixed point, are complex, with magnitude less than unity. For the latter trajectories, the inward spiral near the fixed point is so small as to be invisible on the plot.\label{fig2}}
\end{figure*}
%-------------------------------------------

We took $\tilde{u}_{12},\tilde{u}_{13}>\tfrac{1}{2}$ as the general case that resulted in ES pure states when $\mu = 0$. With no mutations, such setups lead to the population state $\mathbf{p} = \mathbf{e}_1$ being a globally attracting fixed point.

In this and other cases, the limiting case $\mu\rightarrow 0^{+}$ will yield behavior identical to that of the case $\mu=0$. The practical implication of this is that, for sufficiently low mutation rates, the behavior of the system will be practically identical to that with no mutations. 

For sufficiently large mutation rates, however, and for certain setups, the eigenvalues of the Jacobian matrix, evaluated at the interior fixed point, are \emph{complex}, with modulus strictly less than unity, so that the behavior near the fixed point is no longer that of a regularly stable fixed point, but rather an inward spiral. For other setups, though, the eigenvalues are real, both with magnitude lower than one, so that the fixed point is regularly stable, as in the case with no mutations. 

In the case of a dominant type, these are the only behaviors possible under the dynamics; Fig.\ \ref{fig2} illustrates typical trajectories of each.

In Fig.\ \ref{fig3}, for various mutation rates, we plot the regions in the $\tilde{u}_{12}$-$\tilde{u}_{23}$ plane that are associated with fixed points of the two stability types mentioned above. We fix $\tilde{u}_{13} = 0.53$; thus, we are interested in setups with $\tilde{u}_{12}>\tfrac{1}{2}$. Notice that the region for which the `new' behavior occurs, i.e.\ for which the fixed point is no longer regularly stable but rather a stable spiral, grows as we increase $\mu$.

%-------------------------------------------
% Figure: stability plot
%-------------------------------------------
\begin{figure*}[ht!]
\centering
\includegraphics[width=155mm]{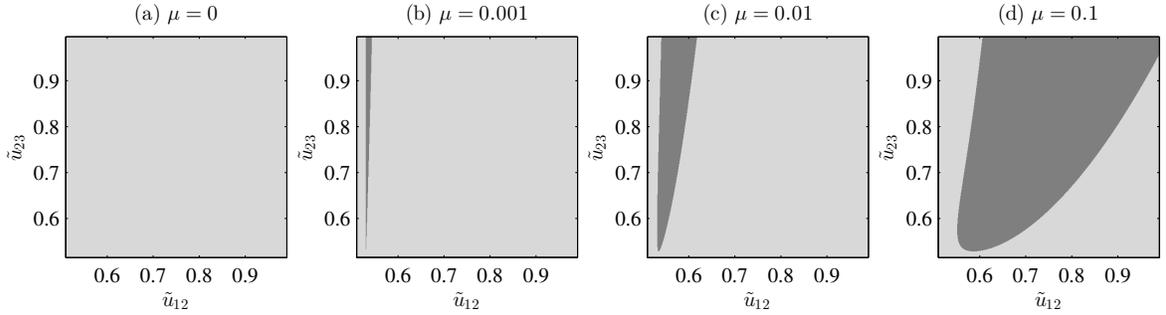}	
\caption{The case of a dominant type. Regions in the $\tilde{u}_{12}$-$\tilde{u}_{23}$ plane which result in fixed points of the various types, for various mutation rates, fixing $\tilde{u}_{13} = 0.53$. The light regions correspond to regularly stable fixed points; the dark regions correspond to stable spirals.\label{fig3}}
\end{figure*}
%-------------------------------------------
%---------------------------------------------------------
\subsubsection{The case of no dominant type}
%---------------------------------------------------------
The general case that resulted in an unstable interior fixed point when $\mu = 0$ was taken to be $\tilde{u}_{12}>\tfrac{1}{2},\tilde{u}_{13}<\tfrac{1}{2},\tilde{u}_{23}>\tfrac{1}{2}$. With $\mu = 0$, such setups lead to an unstable interior fixed point; the trajectories are outward spirals, with limit cycles at the boundary of the simplex. For sufficiently low mutation rates, then, this is the behavior of the system as well. (More precisely, for a given payoff matrix $\tilde{\mathbf{U}}$ obeying the above condition, there exists a positive number $\varepsilon$ such that, if $\mu<\varepsilon$, the trajectories around the fixed point are outward spirals whose limit cycle is arbitrarily close to the boundaries of the simplex.) For larger values of $\mu$, the behavior of the system becomes more complex. Three types of behavior are possible; they are described below, along with the nature of the eigenvalues of the Jacobian (evaluated at the fixed point) that diagnoses the behavior near the fixed point in each case. (The behavior away from the fixed point can not be diagnosed using the eigenvalues; for this we rely on numerical results.)
\begin{enumerate}
\item An unstable spiral near the fixed point, spiraling outward to a limit cycle contained in the interior of $\Delta^3$. The diagnostic eigenvalue condition for the behavior near the fixed point is: $\operatorname{Im}{\lambda_{1,2}} \neq 0, |\lambda_{1,2}|>1$.
\item A stable spiral near the fixed point. In fact, trajectories spiral inward to the fixed point from all points in $\Delta^3$, so that it is globally attracting as well. Diagnostic condition: $\operatorname{Im}{\lambda_{1,2}} \neq 0, |\lambda_{1,2}|<1$.
\item Non-spiral inward trajectories towards the fixed point, which we term a `regularly stable fixed point'. Diagnostic condition: $\operatorname{Im}{\lambda_{1,2}} = 0, |\lambda_{1,2}|<1$.
\end{enumerate}

%-------------------------------------------
% Figure: time evolution of population ($\mu$ vs. $\mu^*$)
%-------------------------------------------
\begin{figure*}[t]
\centering	
\includegraphics[width=155mm]{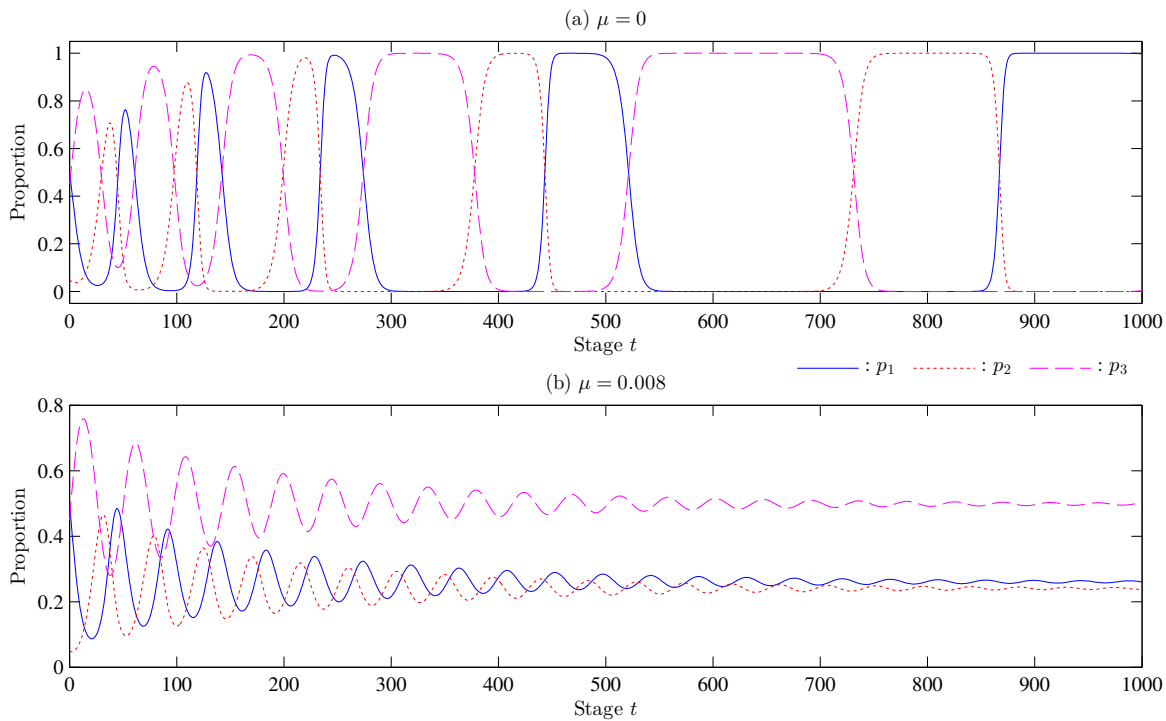}
\caption{The stabilizing effect, through a Hopf bifurcation, of mutations in the case of no dominant type. Evolution of the components of the population state over time for $\mu = 0$ and $\mu = 0.008 > \mu^*$, for the case $\tilde{u}_{12} = 0.7$, $\tilde{u}_{13} = 0.4$, $\tilde{u}_{23} = 0.6$.\label{fig4}}
\end{figure*}
%-------------------------------------------

Our numerical investigation reveals that, for a given payoff matrix $\tilde{\mathbf{U}}$, for very low mutation rates, the behavior is, as expected, that trajectories starting near the fixed point spiral outward to a limit cycle that is contained in the interior of $\Delta^3$, while trajectories starting outside the limit cycle spiral inward towards it. As we increase $\mu$ slightly, the limit cycle shrinks, though the behavior of the system does not change qualitatively. As we continue to increase $\mu$, at some point the limit cycle shrinks to the fixed point. As we increase $\mu$ past this value, the fixed point becomes the limit of inward spiraling trajectories. (Increasing $\mu$ still further -- for most payoff matrices, to very large values -- the fixed point eventually becomes regularly stable. Because we typically consider low mutation rates, and because the stability of the fixed point does not change, this result is not of significant interest.)

This is diagnosed by the observation that the magnitude of complex eigenvalues of the Jacobian matrix, evaluated at the fixed point, decreases through unity as we increase $\mu$ through its critical value.

Thus, the system undergoes a supercritical Hopf bifurcation \cite{Strogatz:1994}: as we increase $\mu$ through some critical value $\mu^*$, the fixed point changes from an unstable spiral to a stable spiral. Moreover, for most payoff matrices, this occurs for relatively low values of $\mu$. For example, the critical value of $\mu$ in the case $\tilde{u}_{12}=0.7$, $\tilde{u}_{13}=0.4$, $\tilde{u}_{23}=0.6$ is $\mu^{*} = 0.0057$. We shall see that, fixing $\tilde{u}_{13}$, for small values of $\tilde{u}_{12}$ and $\tilde{u}_{23}$ (i.e.\ close to $\tfrac{1}{2}$), the critical value of $\mu$ is very small, while for large values of $\tilde{u}_{12}$ and $\tilde{u}_{23}$ (i.e.\ close to $1$), the critical value of $\mu$ is larger.

The significance of this result is clear: even very low mutation rates can stabilize a system that is, in the absence of mutations, unstable. The population state in the case of no mutations eventually oscillates wildly between near-pure states, but if we introduce even a very low mutation rate, the population state instead eventually settles to a stable point of fixed proportions. Fig.\ \ref{fig4} demonstrates these different behaviors, plotting the time evolution of the population proportions for the case of a zero mutation rate and the case of a mutation rate slightly above the critical value. Fig.\ \ref{fig5} illustrates the Hopf bifurcation in the phase plane for the same payoff matrix.

%-------------------------------------------
% Figure: typical ternary trajectories ($\mu$ vs. $\mu^*$)
%-------------------------------------------
\begin{figure*}[t]
\centering
\includegraphics[width=155mm]{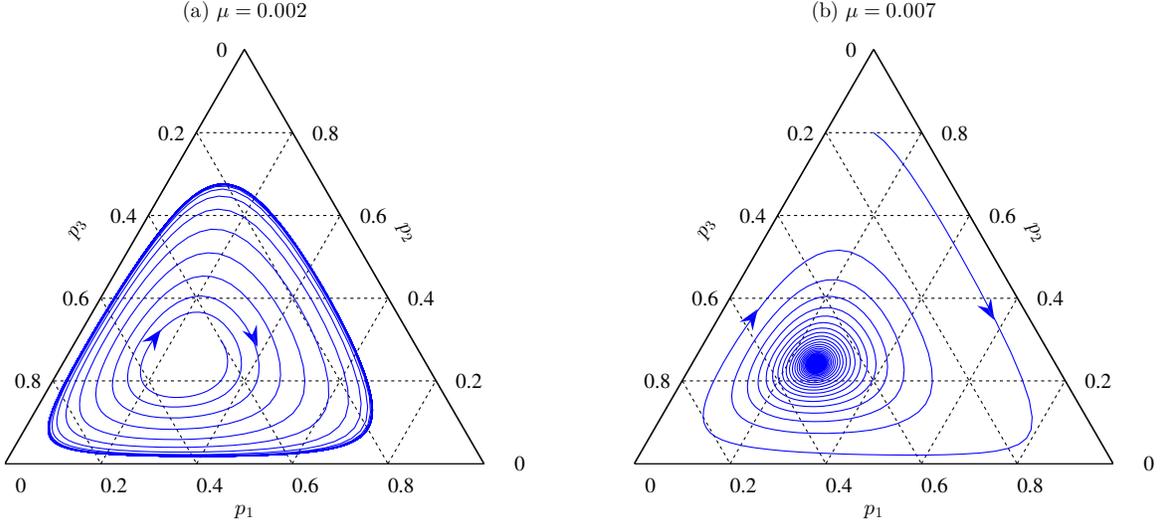}	
\caption{The stabilizing effect, through a Hopf bifurcation, of mutations in the case of no dominant type. Trajectories for (a) $\mu = 0.002<\mu^*$ and (b) $\mu = 0.007>\mu^*$, in both cases for $\tilde{u}_{12} = 0.70$, $\tilde{u}_{13} = 0.40$, $\tilde{u}_{23} = 0.60$. \label{fig5}}
\end{figure*}
%-------------------------------------------

Fixing $\tilde{u}_{13} = 0.4$, and for various mutation rates, we plot in Fig.\ \ref{fig6} the regions in the $\tilde{u}_{12}$-$\tilde{u}_{23}$ plane for which the types of behavior described above arise. The region for which the original (zero mutation rate) behavior, that of an unstable spiral, occurs, shrinks as we increase $\mu$. The boundary of the region within each plot corresponds to the payoff matrices for which the value of $\mu$ in question is the critical value. Notice that the payoff matrices with the largest critical mutation rates are those with the largest values for $\tilde{u}_{12}$ and $\tilde{u}_{23}$.

%-------------------------------------------
% Figure: stability plot
%-------------------------------------------
\begin{figure*}[t!]
\centering
\includegraphics[width=155mm]{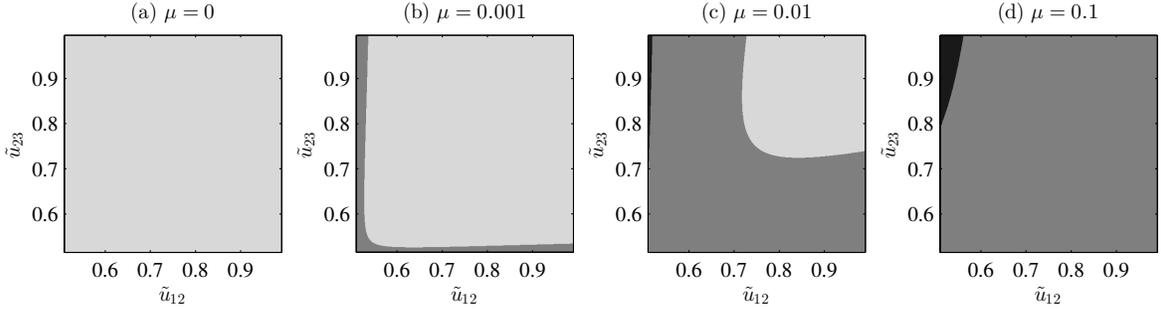}		
\caption{The case of no dominant type. Regions in the $\tilde{u}_{12}$-$\tilde{u}_{23}$ plane that result in fixed points of the various types, for various mutation rates, fixing $\tilde{u}_{13} = 0.4$. The lightest regions correspond to unstable spirals; the mid-tone regions correspond to stable spirals; and the darkest regions correspond to regularly stable fixed points.\label{fig6}}
\end{figure*}
%-------------------------------------------
%--------------------------------------------------------------------------------------------------
\section{Conclusions} \label{sec:conc}
%--------------------------------------------------------------------------------------------------
%---------------------------------------------------------------------------
\subsection{Review}
%---------------------------------------------------------------------------
We have presented a study of a purely competitive dynamics for $3$-type games, arguing that such a dynamics represents a characterization of how certain systems, both natural and artificial, are governed.

Throughout, we maintained the assumptions of an infinite population, and that the probability of advancement of a certain type in an engagement with another is proportional to that type's payoff in the engagement. We further assumed that, in each period, each agent is selected for pairing, though the general model developed in Sec.\ \ref{sec:model} allows for more general selection mechanisms, and indeed for all these assumption to be relaxed.

Initially ignoring the possibility of mutations, we showed that two types of behavior are possible for the population state's evolution, corresponding to two different types of payoff matrix. The first, where one of the types is dominant (in the sense that in an engagement with another type, it always receives the higher payoff) the population state eventually comprises only agents of the dominant type (unless it initially comprised only agents of another type). If no type is dominant, on the other hand, the components of the population state (assuming it is not initially at the mixed-type fixed point, the existence of which is guaranteed for payoff matrices of this type) eventually fluctuate, in procession, between being very close to zero and very close to unity.

Allowing for a nonzero mutation rate, our analysis became predominantly numerical. We showed that, in the case of a dominant type, the behavior of the system does not differ significantly from that under the assumption of a zero mutation rate. In particular, the population state still tends to a globally attracting fixed point near dominant type's pure state. For the case of no dominant type, the dynamics of the system does, in some cases, alter significantly. For very low mutation rates, as expected, the system behaves as it does under the assumption of no mutations; that is, the population state, if starting near the fixed point in the interior, spirals outwards towards a limiting cycle, so that, again, the proportions of the types eventually fluctuate, in procession, with significant amplitude. However, as we increase the mutation rate for a given setup, at some point, the interior fixed point, previously an outward spiral, becomes an inward spiral: the system undergoes a Hopf bifurcation at some critical value of $\mu$. The result is that, for mutation rates larger than the critical value, the fixed point is stable; the population proportions eventually settle down to fixed values, rather than continuing to fluctuate, as in the case of zero or very low mutation rates. Moreover, this critical value was shown to be relatively low for most setups, so that this effect should not be thought of as pathological. That even a very low mutation rate can alter the behavior of the system so significantly -- indeed, entirely reversing its instability -- is a result of both theoretical and practical interest, and it would certainly be worth exploring the significance of this result in the context of real-world dynamical systems.
%---------------------------------------------------------------------------
\subsection{Possibilities for future study}
%---------------------------------------------------------------------------
In considering the possibilities for extension of the analysis carried out here, we naturally turn to the simplifying assumptions we have made, and ask whether the relaxation of any of them could yield further insight into these dynamics.

First, we assumed throughout that $\theta=1$. Increasing the value of this parameter serves to favor the stronger type in a pairing, in the sense that the probability that the stronger type propagates its type to its offspring increases as we increase $\theta$. Of interest is the limiting case \hbox{$\theta=\infty$}, where in a given pairing, the offspring receive with certainty the type of the stronger parent. Intuition suggests that, in the case of a dominant type, that type's predominance in terms of its proportion in the long term population state would be increased relative to the case where $\theta=1$ (for nonzero mutation rates, of course -- for a zero mutation rate, we would expect the long term outcome to be that type's pure state, regardless of $\theta$). In the case of no dominant type, it is not clear whether the dynamics observed for $\theta=1$ would be preserved for $\theta=\infty$, though a numerical investigation would be revealing in this regard.

Second, our assumption that in each period, each agent is selected for mating ($\phi_i \equiv 1$), can be weakened to one where an agent's selection for mating is dependent on its fitness. In this sense, the dynamics would no longer be purely competitive, since an agent's fitness, or average payoff, depends in part on its payoff in engagements with agents of its own type. This would align the dynamics more with the replicator dynamics \cite{hofbauer2003}; whether the resultant behavior would more closely resemble that for the dynamics we considered here or the replicator dynamics could be investigated numerically.

Finally, the assumption of an infinite population was maintained throughout, allowing probabilities and realized proportions to be equated. Of interest would be an investigation of the effect of a finite population on the results above; we might expect the behavior of very large populations to resemble that of an infinite population, but would small populations also exhibit such behavior? Indeed, a preliminary analysis carried out by the authors, making use of simulations of finite population cases, suggests that for some setups, the long run behavior of the population state is qualitatively different for small populations (say, $N=20$) than larger ones (say, $N=200$). A more detailed investigation of the finite population case will be carried out in later work.
%--------------------------------------------------------------------------------------------------
\begin{acknowledgments}
%--------------------------------------------------------------------------------------------------
The authors thank Martin Wittenberg and Drew Fudenberg for helpful comments, and acknowledge with gratitude the University of Cape Town and the National Research Foundation for the provision of financial support.
%--------------------------------------------------------------------------------------------------
\end{acknowledgments}
%--------------------------------------------------------------------------------------------------
%\bibliography{comp-dyn}
%
%--------------------------------------------------------------------------------------------------
\end{document}